\documentclass{article}

\usepackage[UKenglish]{babel}
\usepackage{fontenc}
\usepackage[utf8]{inputenc}
\usepackage{hyperref}
\usepackage{authblk}
\usepackage[a4paper, total={6in, 8in}]{geometry}

\usepackage{enumerate}
\usepackage{graphicx}
\graphicspath{figures}
\usepackage{tikzfig}
\usepackage{tikz-cd}


\tikzstyle{gbox}=[box, draw=black, shape=rectangle, fill={zx_green}, tikzit fill={rgb,255: red,181; green,215; blue,181}]
\tikzstyle{hbox}=[box, draw=black, shape=rectangle, fill=yellow, minimum size=.55em]
\tikzstyle{box}=[draw=black, shape=rectangle, fill=white, minimum size=.95em, inner sep=0.15em, scale=0.85, font={\footnotesize}]
\tikzstyle{gn}=[draw=black, shape=circle, fill={zx_green}, draw=black, inner sep=0.7mm, minimum width=0pt, minimum height=0pt, tikzit fill={rgb,255: red,181; green,215; blue,181}]
\tikzstyle{rn}=[gn, fill={zx_red}, draw=black, tikzit fill={rgb,255: red,215; green,96; blue,96}]
\tikzstyle{gn_phase}=[shape=rectangle, fill={zx_green}, draw=black, minimum size=1.2em, rounded corners=0.5em, inner sep=0.2em, outer sep=-0.2em, scale=0.8, font={\footnotesize}, tikzit shape=circle, tikzit fill={rgb,255: red,181; green,215; blue,181}]
\tikzstyle{rn_phase}=[{gn_phase}, fill={zx_red}, draw=black, tikzit fill={rgb,255: red,215; green,96; blue,96}]
\tikzstyle{ltriang}=[rtriang, shape=isosceles triangle, fill=yellow, draw=black, shape border rotate=180]
\tikzstyle{rtriang}=[shape=isosceles triangle, fill=yellow, draw=black, isosceles triangle stretches=true, inner sep=0.8pt, minimum width=0.25cm, minimum height=2mm]
\tikzstyle{dtriang}=[rtriang, shape=isosceles triangle, fill=yellow, draw=black, shape border rotate=-90]
\tikzstyle{utriang}=[rtriang, shape=isosceles triangle, fill=yellow, draw=black, shape border rotate=90]
\tikzstyle{lmat}=[shape=signal, signal to=west, signal from=east, fill={zx_grey}, draw=black, minimum height=6pt, inner sep=1pt, font={\scriptsize  \boldmath}, tikzit fill=gray, tikzit category=GLA]
\tikzstyle{rmat}=[shape=signal, signal to=east, signal from=west, fill={zx_grey}, draw=black, minimum height=6pt, inner sep=1pt, font={\scriptsize  \boldmath}, tikzit fill=gray, tikzit category=GLA]
\tikzstyle{dmat}=[shape=signal, signal to=west, signal from=east, fill={zx_grey}, draw=black, minimum height=6pt, inner sep=1pt, font={\scriptsize  \boldmath}, tikzit fill=gray, tikzit category=GLA, rotate=90]
\tikzstyle{umat}=[shape=signal, signal to=east, signal from=west, fill={zx_grey}, draw=black, minimum height=6pt, inner sep=1pt, font={\scriptsize  \boldmath}, tikzit fill=gray, tikzit category=GLA, rotate=90]
\tikzstyle{lw}=[shape=isosceles triangle, isosceles triangle stretches=true, fill=black, draw=black, minimum width=0.4cm, minimum height=3mm, inner sep=1pt, shape border rotate=180]
\tikzstyle{rw}=[shape=isosceles triangle, isosceles triangle stretches=true, fill=black, draw=black, minimum width=0.4cm, minimum height=3mm, inner sep=1pt]
\tikzstyle{dw}=[shape=isosceles triangle, isosceles triangle stretches=true, fill=black, draw=black, minimum width=0.4cm, minimum height=3mm, inner sep=1pt, shape border rotate=-90]
\tikzstyle{uw}=[shape=isosceles triangle, isosceles triangle stretches=true, fill=black, draw=black, minimum width=0.4cm, minimum height=3mm, inner sep=1pt, shape border rotate=90]
\tikzstyle{lsplit}=[shape=isosceles triangle, isosceles triangle stretches=true, fill=white, draw=black, minimum width=0.4cm, minimum height=3mm, inner sep=1pt, shape border rotate=180]
\tikzstyle{rmerge}=[shape=isosceles triangle, isosceles triangle stretches=true, fill=white, draw=black, minimum width=0.4cm, minimum height=3mm, inner sep=1pt]
\tikzstyle{vsplit}=[shape=isosceles triangle, isosceles triangle stretches=true, fill=white, draw=black, minimum width=0.4cm, minimum height=3mm, inner sep=1pt, shape border rotate=90]
\tikzstyle{phase}=[draw=black, shape=rectangle, fill=white, minimum size=.95em, inner sep=0.15em, scale=0.85, font={\footnotesize}]
\tikzstyle{black node}=[draw=black, shape=circle, scale=0.3, fill=black, font={\footnotesize}]
\tikzstyle{d_split}=[shape=trapezium, fill=white, draw=black, minimum width=11pt, inner sep=1pt]
\tikzstyle{d_merge}=[shape=trapezium, fill=white, draw=black, minimum width=11pt, inner sep=1pt, rotate=180]
\tikzstyle{bluedot}=[draw, shape=circle, fill={zx_green}, scale=0.35, tikzit fill=blue]
\tikzstyle{wire label}=[font={\scriptsize}, auto]
\tikzstyle{new style 0}=[fill=white, draw=black, shape=circle]

\tikzstyle{braceedge}=[-, decorate, decoration={brace, amplitude=2mm, raise=-1mm}]
\tikzstyle{blue-edge}=[-, very thick, draw=bluegray]
\tikzstyle{hadamard edge}=[-, dashed, draw=blue]
\tikzstyle{thick-arrow}=[->, very thick]
\tikzstyle{thick-line}=[-, very thick]
\tikzstyle{red-edge}=[-, draw=fireenginered, very thick]
\tikzstyle{red-line}=[-, draw=fireenginered]
\tikzstyle{arrow}=[->]
\tikzstyle{c-arrow}=[->, dashed]
\tikzstyle{board}=[-, thick]

\definecolor{zx_grey}{RGB}{211,211,211}
\definecolor{zx_red}{RGB}{232,165,165}
\definecolor{zx_green}{RGB}{216,248,216}
\definecolor{bluegray}{rgb}{0.4, 0.6, 0.8}
\definecolor{fireenginered}{rgb}{0.81, 0.09, 0.13}

\usepackage{import}
\usepackage{subfiles}

\usepackage{mathtools}
\usepackage{physics}
\usepackage{amsfonts}
\usepackage{amssymb}

\usepackage{amsthm}


\newtheorem{Th}{Proposition}
\newtheorem{theorem}[Th]{Proposition}
\newtheorem{proposition}[Th]{Proposition} 
\newtheorem*{proposition*}{Proposition}



\begin{document}
\title{Differentiation of Linear Optical Circuits}
\author[1]{Giovanni de Felice}
\author[2]{Christopher Corlett}
\affil[1]{Quantinuum, 17 Beaumont street, OX1 2NA Oxford, UK}
\affil[2]{University of Bristol, NSQI, Tyndall Ave, BS8 1UD Bristol, UK}
\date{}
\maketitle

\begin{abstract}
    Linear optical circuits with single-photon sources offer a promising platform for quantum chemistry and machine learning.
    However, current applications are all based on support vector machines or gradient-free optimization methods.
    This paper develops classical and quantum algorithms for evaluating the analytic gradients of linear optical circuits 
    with respect to their phase parameters.
    First, we set up a general framework by characterising the class of observables whose expectation values can be estimated efficiently
    by sampling from a passive linear optical circuit with finitely many photons.
    We then show how to compute the gradients of the expectation values of a special class of ``non-interacting'' observables arising in full-counting-statistics.
    Our differentiation algorithm uses the Halmos dilation and requires evaluating two circuits of twice the size, using one additional photon.
    Building on the methods of full-counting-statistics, we show how to recover the gradients of arbitrary observables from the gradient of a non-interacting characteristic function.
    Throughout the paper, we compare the performance of classical and quantum algorithms on the same estimation problems,
    analysing the sampling complexity of the algorithms and suggesting different cases for which quantum speed-ups could be obtained.
\end{abstract}

\section{Introduction}%

Linear optical circuits are fundamental components in photonic computing, both classical \cite{shen_deep_2017, feldmann_parallel_2021}
and quantum \cite{knill_scheme_2001, aaronson_computational_2011}.
Equipped with single-photon sources and detectors, they produce distributions that are hard to reproduce on a classical computer \cite{aaronson_computational_2011}.
In a typical experiment \cite{maring_general-purpose_2023, pentangelo_high-fidelity_2023}, 
a state with finitely many photons is prepared, it evolves through a passive linear optical circuit, 
and we sample the photon-number distribution at the output.
The quantum features of these distributions have motivated several recent experiments using linear optics to perform variational quantum eigensolvers
\cite{peruzzo_variational_2014}, chemistry simulations \cite{sparrow_simulating_2018} and other machine learning tasks \cite{bartkiewicz_experimental_2020,saggio_experimental_2021, bradler_certain_2021, gan_fock_2022}.
However, all applications so far suffer from the inability to compute the gradients of linear optical circuits.

Variational methods for quantum optimization \cite{cerezo_variational_2021} 
have unified several approaches in the fields of quantum chemistry \cite{grimsley_adaptive_2019}, 
combinatorial optimization \cite{amaro_filtering_2022} and machine learning \cite{benedetti_parameterized_2019}.
In this framework, the objective function that needs to be optimised is defined in terms 
of the expectation values of observables over a parametrised quantum circuit \cite{cerezo_variational_2021}.
It is then useful to compute the gradients of these expectation values, 
as they allow us to update the parameters and make the model converge in practical tasks.
Since these gradients are in general hard to estimate classically, we need a method for computing 
them --- in-situ --- by sampling from a quantum computer.
The problem of computing gradients on quantum hardware has received considerable attention in the qubit setting \cite{mitarai_quantum_2018, schuld_evaluating_2019}.
We consider the same problem for discrete-variable photonics: 
linear optical circuits equipped with number-resolving detectors and arbitrary initial states of finitely many photons.

While the theory of observables has been developed extensively in the qubit setting \cite{wocjan_several_2006, bravyi_classical_2021, hamamura_efficient_2020}, 
it has not received much attention in the context of linear optics.
For example, we know that, when the input state is a bosonic product state, the probability of an individual event can be estimated efficiently on a classical computer \cite{aaronson_generalizing_2012}.
However, in the same setting, the complexity of estimating moments and higher-order observables of the distribution are unknown.
Here we propose a general notion of discrete observable whose expectation values can be estimated by sampling from a linear optical circuit.
This includes observables arising in \emph{full-counting-statistics} (FCS), a powerful method to extract the statistical information 
of particle-number distributions \cite{levitov_charge_1993, del_campo_universal_2018, ivanov_complexity_2020, fan_full-counting_2024}.
In FCS, we have access to the expectation values a class of observables parametrised by eigenvalues of the form $e^{i \lambda}$. 
The probabilities, moments and cumulants of the distribution are then recovered by integration and differentiation with respect to $\lambda$.
In the context of variational algorithms, we are not only interested in parametrizing the observable but also in optimising circuit parameters.

The basic parametrizable component in a linear optical circuit is a phase shift of angle $\theta$ which acts on the 
Fock space as the unitary operator $e^{i\hat{n} \theta}$; where $\hat{n}$ is the number operator.
Its derivative with respect to $\theta$ is easily seen to be $i\hat{n} e^{i\hat{n}\theta}$.
This poses difficulty because the number operator $\hat{n}$ is not unitary and not even a bounded operator. 
As a result, the derivative of a phase is not itself a physically realisable circuit.
Our approach to solving this problem is based on dilation theory \cite{halmos1950, shalit_dilation_2020}, 
which provides general methods for simulating quantum mechanical operators by unitary circuits on a bigger space.

This paper develops a framework for differentiating linear optical circuits and their observables.
We start in Section~\ref{sec:linear-optics} by introducing QPath, a graphical language for linear optics that captures both circuits and multi-particle operators such as the ladder and number operators \cite{de_felice_quantum_2023}.
This language reflects how the properties of linear optical circuits are captured in their representation as ``non-interacting'' operators, 
consisting of matrices of size linear in the number of modes $m$.
The dilation theory of finite-dimensional matrices lifts to the multi-particle setting and 
we show that one may recover a unitary operator from any multi-particle diagram.
In Section~\ref{sec:observables}, we define the physical quantities that we wish to differentiate: 
expectation values of discrete observables on the Fock space.
We give an explicit proof that these can be estimated by sampling from a linear optical circuit
to within an additive error $\pm \epsilon s_{n, \lambda}$ where $s_{n, \lambda}$ is the largest singular value of the observable 
on the truncated space with at most $n$ photons.
We then consider the consequences of this result for several natural estimation problems arising in linear optics,
and we compare the scaling with known, exact and approximate, classical algorithms.
In Section~\ref{sec:circuit-gradient}, we consider the problem of evaluating the gradient of the expectation values of non-interacting observables with respect to circuit parameters.
We show that these can be estimated by sampling from a second circuit of twice the size, using one additional photon.
We are however only able to bound the approximation error by $s_W^n$ where $s_W = \frac{1+\sqrt{5}}{2}$, the golden ratio, is the singular value 
of the underlying matrix of the number operator and we discuss how our result could be improved.
In Section~\ref{sec:observable-gradient} we show that the framework of the full-counting-statistics carries over to the variational setting 
and we are able to estimate the gradients of moments and higher-order observables from the gradient of the characteristic function.
We end with an example in Section~\ref{sec:example}, applying our methods to the full-counting-statistics of a tunable beam splitter.


\section{Linear optical diagrams}\label{sec:linear-optics}%

\subsection{Matrices}
Consider the class of processes generated by the \emph{phase shift} with parameter 
$\theta \in [0, 2\pi)$ and the \emph{beam splitter}. These are defined by a corresponding matrix.
\begin{align}\label{eq-phase}
    \begin{split}
        \scalebox{0.8}{\tikzfig{figures/phase-shift}}
    \end{split}
\end{align}
Matrices can be composed in sequence, by matrix mutiplication, and in parallel using the direct sum $\oplus$. 
We depict these compositions for two matrices, $A$ and $B$, diagrammatically as
$$\scalebox{0.95}{\tikzfig{figures/matrix-comp}}$$
The labels on the input and output wires represent the underlying dimensions of a matrix with $m$ rows and $m'$ columns.
Two parallel wires with labels $m$ and $m'$ have dimension $m + m'$ and wires without a label have dimension $1$.
By composing beam splitters and phase shifts we always make a unitary operation and there are efficient algorithms to route any $m \times m$ unitary matrix as a circuit of phase shifts and beam splitters
\cite{reck_experimental_1994,clements_optimal_2016}.
Any matrix can be decomposed into a diagram built from the split, merge and box generators defined by 
$$\scalebox{0.7}{\tikzfig{figures/qpath/generators/split}} \, , \, \scalebox{0.7}{\tikzfig{figures/qpath/generators/merge}}$$ and $$\scalebox{0.7}{\tikzfig{figures/box}}$$ 
respectively, where $c \in \mathbb{C}$. For example, the beam splitter in equation \ref{eq-phase} can be written as
\begin{align}\label{eq-beamsplitter}
    \begin{split}
         \scalebox{0.85}{\tikzfig{figures/beam-splitter-expand}}.
    \end{split}
\end{align}

\subsection{Photons}
The quantum features of linear optics arise when we allow for creation and annihilation of indistinguishable particles. 
We do this -- syntactically -- by using $n$-photon states $\tikzfig{figures/qpath/generators/state-1}$ and effects $\tikzfig{figures/qpath/generators/effect-1}$. The behaviour of $n$-photon diagrams can be summarised with the equations
\begin{align}\label{eq-photon-defs}
    \begin{split}
         \scalebox{0.8}{\tikzfig{figures/photon-defs}}
    \end{split}
\end{align}
where $c$ is a scalar and $\delta$ is the Dirac delta. 
The notation where multiple wires are entering a merge is justified by the associativity of the merge. 
These rules also apply when we take the vertical reflection which corresponds to taking the transpose.
Any matrix with $n$-photon states and effects, also called QPath diagram~\cite{de_felice_quantum_2023}, can be rewritten in a normal form $(A, I, J, \alpha)$ 
where $A$ is an $(m + k) \times (m' + k)'$ matrix, $I\in \mathbb{N}^{k}$ and $J \in \mathbb{N}^{k'}$ are basis states within the Fock space and $\alpha$ is a complex scalar
$$\scalebox{1}{\tikzfig{figures/matrix+ca}} \, .$$

In linear optics we have a single-particle unitary $U$ which extends to a ``non-interacting'' operator $\widetilde{U}$ on the multi-particle space (symmetric Fock space).
This mapping extends to any matrix $A$ giving an operator $\widetilde{A}$, such that $\widetilde{AB} = \widetilde{A}\widetilde{B}$, $\widetilde{A \oplus B} = \widetilde{A} \otimes \widetilde{B}$
and $\mathbb{I} = \widetilde{\mathtt{I}}$ where $\mathtt{I}$ and $\mathbb{I}$ are the identity on the single-particle 
and multi-particle space, respectively.
Multiple definitions of this mapping are available in the literature \cite{vicary_categorical_2008, aaronson_computational_2010,ivanov_complexity_2020}.
Here we use a combinatorial definition due to Aaronson \cite{aaronson_computational_2010}:
the action of the operator $\widetilde{A}$ on the Fock space is defined by the amplitudes
for input and output basis states $X, Y \in \mathbb{N}^m$, given by
\begin{equation}\label{eq-amplitudes}
	\bra{Y} \widetilde{A} \ket{X} \; = \begin{cases}
   		  				           \frac{\mathtt{Perm}(A_{X, Y})} {\sqrt{\prod_{x} X_x! \prod_y Y_y!}}& \text{if $n_X = n_Y$} \\
							   0 & \text{otherwise}
						       \end{cases}
\end{equation}
where $\mathtt{Perm}$ denotes the matrix \emph{permanent}, $n_X = \sum_{i = 1}^m X_i$ is the number of photons in $X$, and $A_{X, Y}$ is obtained as follows.
First we construct the $m \times n_Y$ matrix $A_Y$ by taking $Y_j$ copies of the $j$th column of
$A$ for each $j \leq m$, then we construct $A_{X, Y}$ by taking $X_i$ copies of the $i$th row of $A_Y$.
This process corresponds to the application of the last rule in (\ref{eq-photon-defs}). 
The effect of the introduced merge and split maps is to copy the rows and columns of $A$. 
The mapping moreover extends to arbitrary diagrams $(A, I, J,\alpha)$, 
giving operators of the form $\alpha (\bra{J} \otimes \mathbb{I})\widetilde{A}(\ket{I} \otimes \mathbb{I})$.

When all input and output wires of a diagram are capped with states and effects, we can reduce the diagram to a scalar using the rules above.
If the diagram represents a linear optical circuit, the scalar represents the \emph{amplitude} of measuring the output effect for the given input state.
Using (\ref{eq-beamsplitter}) and (\ref{eq-photon-defs}), we demonstrate the HOM-dip effect where two indistinguishable photons entering a beam splitter from either input will never leave separately from either output.
\begin{align}\label{eq-hom-dip}
    \begin{split}
         \scalebox{0.8}{\tikzfig{figures/hong-ou-mandel-example}}
    \end{split}
\end{align}
The amplitudes are expressed as sums over the matchings of photon states to photon effects.
More precisely, the amplitude of an $m \times m'$ matrix $A$ for a pair of basis states $X \in \mathbb{N}^m$ and $Y \in \mathbb{N}^{m'}$
is given by equation (\ref{eq-amplitudes}).

Diagrams with states and effects allow us to reason about bosonic operators that can not be represented as linear optical circuits.
The bosonic creation $a^\dagger$ and annihilation $a$ operators are given by $a^\dagger=\scalebox{0.7}{\tikzfig{figures/creation}}$ and $a=\scalebox{0.7}{\tikzfig{figures/annihilation}}$. 
We obtain different representations of the number operator $\hat{n} = a^\dagger a$ 
\begin{align}\label{eq-number-op}
    \begin{split}
         \scalebox{0.8}{\tikzfig{figures/number-op}}
    \end{split}
\end{align}
The latter gives the underlying matrix $W = \begin{pmatrix} 0 & 1\\ 1 &1 \end{pmatrix}$.
It has spectral norm $\norm{W} = \frac{1+\sqrt{5}}{2}$ greater than $1$. 
This can be used to prove that the number operator is an unbounded operator on the Fock space. 
Indeed $\widetilde{A}$ is bounded if and only if $\norm{A} \leq 1$.
We see that the number operator is not bounded nor unitary so can not be physically implemented in its current form.

\subsection{Dilation}
Heralded linear optical circuits --- built from beam splitters, phase shifts, photon sources and detectors --- 
realise, by post-selection, the class of bounded operators of the form $(U, I, J)$ where $U$ is a unitary matrix.
To implement $(U, I, J)$ we have to prepare an ancillary state $I$ in the first $k$ input modes
and post-select for the output effect $J$ on the first $k'$ output modes.

Dilations allow us to simulate arbitrary diagrams with a heralded linear optical circuit.
A unitary dilation for a finite-dimensional matrix $A$ is a larger unitary matrix $U_A$ that contains $A$ as a block.
In the context of linear optics, dilations have been used to encode combinatorial quantities into the amplitudes of a unitary circuit \cite{mezher_solving_2023}. 
Generalising this work, we show that heralded circuits are sufficient to simulate arbitrary matrices with
$n$-photon states and effects.

\begin{theorem}\label{thm-dilation}
    For any diagram $(A, I, J)$, there is a unitary matrix $U_A$
    such that the following are equal as operators:
    $$\scalebox{0.9}{\tikzfig{figures/dilation}}$$
    Equivalently, for any input state $X\in \mathbb{N}^{m}$ and output effect $Y \in \mathbb{N}^{m'}$: 
    $$\bra{J, Y} A \ket{I, X} = s_A^n \bra{\vec{0}_{m+k}, J, Y} U_A \ket{\vec{0}_{m'+k'}, I, X}$$
    where  $\ket{I, X}$ is the concatenation of states $I$ and $X$, $\vec{0}_g$ is the input state of 0 photons on $g$ modes, $n = n_I + n_X$ is the
number of input photons, $s_A = \norm{A}$ if $\norm{A} \geq 1$ otherwise $s_A = 1$ 
    and $\norm{A}$ denotes the spectral norm of $A$, i.e. its largest singular value. 
\end{theorem}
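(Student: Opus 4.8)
The plan is to reduce the claim to the behaviour of matrix permanents under scaling, combined with the Halmos unitary dilation of a contraction. I first note that the two formulations are equivalent: an operator on the Fock space is determined by its matrix elements $\bra{J+Y}(-)\ket{I+X}$ in the occupation-number basis, and as $X \in \mathbb{N}^m$ and $Y \in \mathbb{N}^{m'}$ range over all occupations of the external modes these exhaust the entries of $\mathcal{A}: \mathcal{H}_m \to \mathcal{H}_{m'}$. The diagrammatic equation is the operator-level packaging of this family of scalar identities, so by completeness of the occupation-number basis it is enough to prove the displayed amplitude identity for every $X, Y$.

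Next I normalise the matrix. Setting $s_A = \max(\norm{A}, 1)$, the rescaled matrix $T := A/s_A$ satisfies $\norm{T} \leq 1$, so $T^\dagger T \preceq \mathbb{1}$ and $T T^\dagger \preceq \mathbb{1}$ and the positive square roots $D = \sqrt{\mathbb{1} - T^\dagger T}$ and $\tilde{D} = \sqrt{\mathbb{1} - T T^\dagger}$ exist. The Halmos dilation
$$U = \begin{pmatrix} T & \tilde{D} \\ D & -T^\dagger \end{pmatrix}$$
is then unitary, with $T$ occupying the block that connects the original input modes to the original output modes; the complementary rows and columns are the ancillary modes, which carry the vacuum in the statement (their ordering relative to the original modes is immaterial, being a relabelling of modes). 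When $\norm{A} \leq 1$ we have $s_A = 1$ and $T = A$, so no rescaling takes place.

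The heart of the argument is a bookkeeping observation about the amplitude formula \eqref{eq-amplitudes}. Preparing the ancillary input modes in the vacuum and post-selecting the ancillary output modes on the vacuum amounts to taking zero copies of the corresponding columns and rows in the submatrix-with-repetitions construction $A_{I,J}$. Consequently the $n \times n$ matrix feeding the permanent for $\bra{\vec{0} + J + Y}\,\mathcal{U}\,\ket{\vec{0} + I + X}$ is assembled entirely from the $T$-block of $U$ and equals $T_{(I+X),(J+Y)} = s_A^{-1} A_{(I+X),(J+Y)}$ entrywise, where $n = \mathtt{sum}(I+X)$ and both sides vanish unless $\mathtt{sum}(I+X) = \mathtt{sum}(J+Y)$ by conservation of photon number. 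The normalising denominators match on the two sides, since adjoining vacuum modes contributes only factors of $0! = 1$.

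Finally I invoke homogeneity of the permanent: for an $n \times n$ matrix $\mathtt{Perm}$ is homogeneous of degree $n$, so $\mathtt{Perm}(T_{(I+X),(J+Y)}) = s_A^{-n}\,\mathtt{Perm}(A_{(I+X),(J+Y)})$. Dividing by the common denominator and rearranging gives exactly $\bra{J+Y}\mathcal{A}\ket{I+X} = s_A^n \bra{\vec{0} + J + Y}\,\mathcal{U}\,\ket{\vec{0} + I + X}$. I expect the main obstacle to be the third step: one must verify that vacuum on the ancilla genuinely decouples the permanent from the off-diagonal blocks $\tilde{D}$, $D$ and $-T^\dagger$ of $U$, so that only the $T$-block survives and no cross terms appear. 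This follows from the ``take $I_x$ copies of row $x$'' definition of $A_{I,J}$ with occupation $0$ on every ancillary mode, but it is the point that needs to be checked with care.
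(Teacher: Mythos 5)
Your proposal is correct and follows essentially the same route as the paper's proof: rescale $A$ by $s_A$ to obtain a contraction $T$, take its Halmos unitary dilation built from the defect operators, observe that vacuum on the ancillary modes makes the permanent see only the $T$-block, and finish with homogeneity of the permanent, $\mathtt{Perm}(zX) = z^n\,\mathtt{Perm}(X)$. The only differences are presentational — your block ordering of $U$ is a mode permutation of the paper's, and you spell out the bookkeeping step (vacuum decoupling the off-diagonal blocks and the $0!=1$ denominators) that the paper dismisses as ``easy to see.''
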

\begin{proof}
    Since $\norm{A}$ is finite, the matrix $T = \frac{A}{s_A}$ is always a contraction. 
	It therefore admits a unitary dilation $U_A$.
	We may construct such a dilation following Halmos \cite{halmos1950}.
	First, we compute the defect operator $D(T) = \sqrt{\mathtt{I} - T T^\dagger}$ using the singular value decomposition of $A$.
	The Halmos dilation $U_A$ is then given by the unitary operator
	\begin{equation}\label{eqn:unitary}
	    U_A =
	    \begin{pmatrix}
	    - T^\dagger & D(T^\dagger)\\ 
	    D(T) & T
	    \end{pmatrix}.
	\end{equation}
	We see that
	\begin{align*}
	    \bra{\vec{0}_{m+k}, J, Y} U_A \ket{\vec{0}_{m'+k'}, I, X} =  \bra{J, Y}T \ket{I, X}\\
	        = \frac{1}{s_A^n}\bra{J, Y} A \ket{I, X}
	\end{align*}
	where the second equality with $n = n_I + n_X$ follows from (\ref{eq-amplitudes}) and a simple property of permanents:
	for any $n \times n$ matrix $X$ and complex number $z$, it holds that $\mathtt{Perm}(z X) = z^n \, \mathtt{Perm}(X)$.
\end{proof}

When the underlying matrix is a contraction, $\norm{A} \leq 1$, we may directly extend $A$ to a unitary on $2m$ modes. 
If instead $\norm{A} \geq 1$, the operator is unbounded and 
we need to normalise the amplitudes by a factor of $\norm{A}^n$ where $n$ is the number of input photons.
In Appendix \ref{app:dil_num_op}, we give an example showing how to simulate the number operator with a heralded linear optical circuit.

\section{Observables in linear optics}\label{sec:observables}

\subsection{Definition}

Consider an experimental setup where a state $\psi$ is prepared
and it evolves through an interferometer defined by the single-particle unitary $U$.
We sample from a distribution of photon numbers $X \in \mathbb{N}^m$ with the following probabilities: 
\begin{equation}\label{def-probabilities}
    P_U(X \vert \psi) = \Vert\bra{X} \widetilde{U} \ket{\psi}\Vert^2
\end{equation}
We may then consider different statistics or observables that can be computed from this distribution.

Throughout this paper, the state $\psi$ is assumed to be a pure state with a fixed number of photons.
More precisely, $\psi$ is a class of states parametrised by the number of modes $m$ and photons $n$,
given for any $m$ and $n$ by a superposition $\ket{\psi} = \sum_{X \in \Phi_{m, n}} \psi_X \ket{X}$ over the finite set
$\Phi_{m, n} = \{ X \in \mathbb{N}^m \, \vert \, n_X = \sum_i X_i = n \}$.
The natural example with $m = n$ is the bosonic product state $\ket{\psi} = \ket{1, \dots, 1}$,
which serves as initial state in the boson sampling proposal~\cite{aaronson_computational_2011}.
As we will see, different classical approximation results are known in this case, for estimation (rather than sampling) problems.
A second class of states of interest in our setting is given by entangled photonic states
that can be generated by quantum dot sources~\cite{istrati_sequential_2020, wein_photon-number_2022, thomas_efficient_2022}.
For example, linear photonic cluster states~\cite{istrati_sequential_2020} are $n$-photon states on $2n$ modes
of the form $\ket{\psi} = \frac{1}{\sqrt{2}^n}\sum_{b \in \{0, 1\}^n} \ket{X_b}$ where $(X_b)_{2i} = b_i$ and $(X_b)_{2i + 1} = \neg b_i$.
As the number of non-zero terms grows exponentially with $n$, the resulting distributions $\widetilde{U}\ket{\psi}$
may not be easy to simulate with bosonic product states.
In fermionic linear optics, advantage results are known for different classes of entangled initial states~\cite{ivanov_computational_2017, oszmaniec_fermion_2022},
while little is known in the bosonic setting \cite{ivanov_complexity_2020}.

We define a \emph{discrete observable} to be a normal operator $\mathcal{Q}$ on the Fock space that satisfies
$\mathcal{Q} = \widetilde{U}^\dagger \Lambda \widetilde{U}$ for an $m \times m$ unitary matrix $U$ and a diagonal operator $\Lambda$, 
in the sense that $\Lambda \ket{X} = \lambda(X) \ket{X}$ for any basis state $X \in \mathbb{N}^m$, 
where $\lambda: \mathbb{N}^m \to \mathbb{C}$ is a function assigning an eigenvalue to each basis vector.
In this case, we can evaluate the \emph{expectation values} of $\mathcal{Q}$ from the probabilities in (\ref{def-probabilities}). 
For any state $\psi$ with a finite number of photons $n$, we have:
\begin{align}\label{eq-expectation-values}
    E(\mathcal{Q}) &= \bra{\psi} \mathcal{Q} \ket{\psi} \nonumber \\
	 &= \bra{\psi} \widetilde{U}^\dagger \Lambda \widetilde{U} \ket{\psi} = \sum_{X \in \Phi_{m, n}} \lambda(X) P_U(X \vert \psi)
\end{align}
Note that $\mathcal{Q}$ is defined to be normal rather than self-adjoint (or Hermitian), 
this means the eigenfunction $\lambda$ is complex- rather than real-valued. 
We will see different examples of naturally occuring observables that are not self-adjoint.
By definition, any discrete observable is normal but the opposite is not true as there are 
unitaries on the Fock space that are not of the form $\widetilde{U}$.
Discrete observables represent the quantum statistics that we may compute given an initial state $\psi$ 
a passive linear optical circuit $U$ and number-resolving detectors.

\subsection{Sampling complexity}

A natural question is: can we estimate the expectation values in (\ref{eq-expectation-values}) \emph{efficiently} by sampling the photon-number distribution?
Sampling from $\tilde{U} \ket{\psi}$ gives us an obvious estimator for the expectation: we draw $T$ independent samples $\{ X_i \}_{i=1}^T$ and compute the sample mean
\begin{equation}
\overline{E}(\mathcal{Q}) = \frac{1}{T} \sum_{i=1}^T \lambda(X_i) \, .
\end{equation}
The question becomes: how big does $T$ need to be for an accurate estimation?
The answer to this question depends on the magnitude of $\lambda(X)$ over the states reachable from $\psi$. 
For example, if an eigenvalue is very large and its corresponding probability is small, 
then we will need more samples from the distribution to correctly account for it.

\begin{theorem}\label{thm-estimation}
    The expectation value $\bra{\psi} \mathcal{Q} \ket{\psi}$ of any discrete observable $\mathcal{Q}$ 
    can be estimated by sampling from $\widetilde{U}\ket{\psi}$, for a fixed success probability, in time $O(\frac{1}{\epsilon^2})$ 
    with an additive error of $\pm \epsilon \, s_{n, \lambda}$ where 
    $s_{n, \lambda} =  \mathtt{max}\{ \vert\lambda(X)\vert \}_{X \in \Phi_{m, n}}$
    is the largest singular value of $\mathcal{Q}$ over states with $n$ photons,
    $\lambda$ is the eigenfunction associated to $\mathcal{Q}$,
    $n$ is the maximum number of photons in $\psi$, and $\epsilon$ is the estimation precision.
\end{theorem}
\begin{proof}
    Note first that the maximum singular value $s_{n, \lambda}$ exists for any $n \in \mathbb{N}$ and any $\lambda: \mathbb{N}^m \to \mathbb{C}$. 
    Let $\epsilon$ be an estimation precision and $\delta$ a sampling error probability.
    By applying Hoeffding's inequality to the real and imaginary parts of the complex random variable we have:
    $$ P( \vert \overline{E}(\mathcal{Q}) - E(\mathcal{Q}) \vert \geq \epsilon) \leq 4e^{- \frac{T \epsilon^2}{4 s_{n, \lambda}^2}}$$
    Therefore, to achieve the error probability $\delta$ we must sample for time
    $T = \mathtt{log}(\frac{4}{\delta}) \frac{4 s_{n, \lambda}^2}{\epsilon^2}$.
    In other words, we can achieve an estimation precision of $\pm \epsilon s_{n, \lambda}$ in time 
    $T = O(\frac{\mathtt{log}(\frac{1}{\delta})}{\epsilon^2})$.
\end{proof}

A typical problem in our setting appears when the factor of $s_{n, \lambda}$ grows exponentially with $n$.
In this case, we can trade the estimation error $\epsilon$ with sampling time $T$ or error probability $\delta$, 
generally suffering an exponential blow up in each. 
A dual problem arises due to the expectation values becoming exponentially small as we increase the number of photons.
When the expectation value is exponentially small, $0$ provides a good additive approximation.
This is a common problem in quantum systems known as the ``orthogonality catastrophe'' or "sign problem" \cite{anderson_infrared_1967}, 
which we do not study in this paper.

\subsection{Full-counting statistics} \label{sec:full-counting-statistics}

The full-counting-statistics (FCS) offers a series of tools to study the statistics of particle-number distributions,
such as moments and individual probabilities.
In FCS we consider a state $\psi$ subject to a unitary evolution $U$ and we are interested in the characteristic function
\begin{equation}\label{eq-generating-function}
    E(\lambda) =  \sum_{X \in \Phi_{m, n}} e^{i \sum_{j=1}^m \lambda_j X_i} P_U(X \vert \psi) \, ,
\end{equation}
for $\lambda_j \in \mathbb{R}$.
The power of the FCS comes from looking at how the characteristic function varies as a function of $\lambda$.
For simplicity, consider a single parameter $\lambda = \lambda_i$ and set all remaining parameters to zero.
The number operator on mode $i$ is denoted $\hat{n} = \hat{n}_i$.
We use the differentiation operator $\frac{d}{d\lambda} e^{i \lambda \hat{n}} = i \hat{n} e^{i \lambda \hat{n}}$.
Applying this to the characteristic function we obtain the $k$-order moments of the distributions as
$$ \left\langle \hat{n}^k \right\rangle = \bra{\psi} U^\dagger \hat{n}^k U \ket{\psi} = \left[\frac{d^k}{d(i \lambda)^k} E(\lambda)\right]_{\lambda = 0}$$
This corresponds to writing $\chi$ as a generating function
$$E(\lambda) = \sum_{k \in \mathbb{N}} \frac{(i \lambda)^k}{k!} \left\langle \hat{n}^k \right\rangle \, ,$$
which gives us access to important quantities such as mean photon number $\left\langle \hat{n} \right\rangle$, 
variance $\left\langle \hat{n}^2 \right\rangle - \left\langle \hat{n} \right\rangle^2$ and higher-order cumulants
of the distribution.
We can moreover express the photon number distribution as a Fourier transform of the characteristic function.
$$ P_U(X_i = n \, \vert \, \psi) = \frac{1}{2 \pi} \int_{-\pi}^{\pi} E(\lambda) e^{-i \lambda n} d\lambda $$
We see that the FCS gives us access to the relevant statistics of the photon-number distribution as derivatives 
and integrals of $E(\lambda)$ with respect to $\lambda$.

\section{Estimation problems}

We now consider different examples of discrete observables $\mathcal{Q} = \widetilde{U}^\dagger \Lambda \widetilde{U}$
that can be estimated efficiently by sampling from a linear optical circuit.
We compare their performance against classical methods for the same tasks. 
Of particular interest is Gurvits' algorithm \cite{gurvits_complexity_2005,aaronson_generalizing_2012} which provides an estimate of the permanent $\mathtt{Perm}(A)$ of an arbitrary matrix $A$ to within additive error $\pm \epsilon \norm{A}^n$ in $O(\frac{n^2}{\epsilon^2})$ time. 
The bound on the error comes from a bound on the terms in Ryser's formula, which is used as an estimator of the permanent.
This algorithm allows us to approximate expectation values of the form (\ref{eq-expectation-values}) for a large class of observables.

\subsection{Probabilities}\label{sec:probabilities}

The individual probabilities $P_U(X \vert \psi)$ correspond to the expectation values of observables given by $\Lambda = \ket{X} \bra{X}$, $X \in \Phi_{m, n}$.
Since the maximum singular value of $\Lambda$ is $1$, by Proposition~\ref{thm-estimation}, $P_U(X \vert \psi)$ can be estimated in time $O(\frac{1}{\epsilon^2})$ 
with additive error $\pm \epsilon$.
When the initial state is the bosonic product state, we can estimate the probabilities with the same additive error using Gurvits' algorithm.
Indeed, we have that
$$P_U(X \vert 1, \dots, 1) = \vert \bra{X}\widetilde{U}\ket{1, \dots, 1} \vert^2 = \frac{\vert \mathtt{Perm}(U_{X, \overrightarrow{1}}) \vert^2}{\prod_j X_j !}$$
which can be estimated in polynomial time to within additive error $\pm \epsilon \norm{U} = \pm \epsilon$ by Aaronson and Hance's improvement of Gurvits' algorithm \cite{aaronson_generalizing_2012}.

The marginal probabilities $P_U(X_i = I_i \, , \, i \leq k\, \vert \, \psi)$ are the expectation values of observables 
of the form $\Lambda = \mathbb{I}_{m-k} \otimes \ket{I}\bra{I}$ where $I \in \mathbb{N}^k$ is a basis state of $k$ modes.
These observables can be depicted as follows
$$ \scalebox{0.95}{\tikzfig{figures/obs-marginal}}$$
When the number of observed modes $k$ is constant, there are polynomial time algorithms to evaluate
the marginal probabilities exactly~\cite[Proposition 68]{aaronson_computational_2011}.
Marginals can also be used to model photon loss in linear optical circuits~\cite{oszmaniec_classical_2018}.


\subsection{Non-interacting observables}\label{sec:non-interacting}

We define non-interacting observables $\mathcal{Q} = \widetilde{Q}$ to be operators generated by an $m \times m$ observable $Q$ on the single-particle space.
We only assume that $Q$ is normal, or equivalently we have $Q= U^\dagger D U$ where $U$ is unitary and $D = \mathtt{diag}(l_1, \dots, l_m)$ is diagonal.
$$ \scalebox{0.95}{\tikzfig{figures/obs-single-particle}}$$
The eigenvalues of $\widetilde{Q}$ on the $n$-photon multi-particle space are given by products $\lambda(X) = \prod_i l_i^{X_i}$ where $X \in \Phi_{m, n}$.
Therefore we can bound the singular value as $\absolutevalue{\lambda(X)} \leq \mathtt{max}(\absolutevalue{l_i})^n = \norm{Q}^n$.
By Proposition~\ref{thm-estimation}, the expectation value $E(Q) = \bra{\psi} \widetilde{Q} \ket{\psi}$ for a non-interacting observable $Q$ can
be estimated with additive error $\pm \epsilon \norm{Q}^n$ in time $O(\frac{1}{\epsilon^2})$, by sampling from a linear optical circuit.
When $\psi$ is the bosonic product state, this can also be achieved classically using Gurvits' algorithm introduced above. 
We obtain an additive approximation of $E(Q)$ to $\pm \epsilon \norm{Q}^n$ precision, as this corresponds to evaluating a single permanent
$$\bra{1, \dots, 1}  \widetilde{Q} \ket{1, \dots, 1} = \mathtt{Perm}(Q) \leq \norm{Q}^n \, .$$
More efficient classical approximation algorithms are known when the matrix $Q$ is positive definite or 
positive semi-definite, and when we can bound the variance of the eigenvalues of $Q$, see~\cite{lim_approximating_2023}.

\subsection{Characteristic function}\label{sec:characteristic}

The characteristic function $E(\lambda)$ on state $\widetilde{U} \ket{\psi}$, defined for a vector $\lambda \in \mathbb{R}^m$,
is an example of non-interacting observable that arises in the context of the full-counting-statistics.
The characteristic function defined in \ref{eq-generating-function} corresponds to the expectation of the following diagram.
$$ \scalebox{0.95}{\tikzfig{figures/full-counting-statistics}}$$
It can equivalently be written as
\begin{equation}\label{eq-characteristic-function}
    E(\lambda) = \bra{\psi} \widetilde{U}^\dagger e^{i \sum_{j=0}^m \lambda_j \hat{n}_j}  \widetilde{U} \ket{\psi} = \bra{\psi} \widetilde{U}^\dagger \widetilde{D}(\lambda) \widetilde{U} \ket{\psi}
\end{equation}
where $\lambda_j$ is the j\textsuperscript{th} entry in $\lambda$, $\hat{n}_j$ is the number operator on mode $j$ and $D(\lambda)$ is a diagonal matrix with entries $e^{i \lambda_j}$.
Indeed, one may check that $\widetilde{D}(\lambda) = e^{i \sum_{j=0}^m \lambda_j \hat{n}_j}$.
Since $\norm{D(\lambda)} = 1$, by Proposition~\ref{thm-estimation}, the characteristic function $E(\lambda)$ for a given $\lambda \in \mathbb{R}^m$
can be estimated in time $O(\frac{1}{\epsilon^2})$ with additive error $\pm \epsilon$, by sampling from $\widetilde{U}\ket{\psi}$.
Ivanov and Gurvits~\cite{ivanov_complexity_2020} showed that the characteristic function can be evaluated \emph{exactly} in polynomial time
when the input state is the bosonic product state and the number of measured eigenvalues $\{\lambda_j\}_{j = 1}^k$ is a constant $k \leq m$.
Their result holds for $\lambda_i \in \mathbb{C}$, and therefore for arbitrary non-interacting observables where we only measure a constant number of modes.

\subsection{Polynomials of number operators}\label{sec:polynomials}
The classical simulation and approximation results discussed above are only valid when the input state $\psi$ is a bosonic product state. 
They moreover do not directly allow for the efficient approximation of higher-order classes of discrete observables.
For example, if we want to approximate the first moment $\left\langle \hat{n}_1 \right\rangle =  \bra{\psi} \widetilde{U}^\dagger \Lambda \widetilde{U} \ket{\psi}$, 
this will require estimating the permanent of the matrix defined by the following diagram 
$$ \scalebox{0.95}{\tikzfig{figures/obs-number-operator}}$$
Note that this matrix has spectral norm $s_W > 1$ given by the norm of the underlying matrix of the number operator in (\ref{eq-number-op}).
Similarly, the $k$-order moment $\left\langle \hat{n}_1^k \right\rangle$ is given by the permanent of an $(n + k) \times (n + k)$ matrix obtained by conjugating 
by $U$ the matrix $W_k \oplus \mathtt{I}_{m - 1}$ where $W_k$ is the lower-triangular matrix of ones.
The matrices $W_k$ have increasing singular values $s_{W_k} > s_{W_{k-1}}$, and are not even positive semi-definite.  
This results in a classical approximation with error $\pm \epsilon \, s_{W_k}^n$ according to Gurvits algorithm.
In comparison, estimating the moments given access to the quantum distribution $\widetilde{U}\ket{\psi}$ can be achieved in polynomial time 
with additive error $\pm \epsilon$. This is true for any finite degree polynomial built from number operators.

\begin{theorem}\label{thm-number-op}
    The expectation value of a polynomial of number operators $\Lambda = p(\hat{n}_1, \dots, \hat{n}_m)$ 
    of degree $k$, in a state $\widetilde{U}\ket{\psi}$ with at most $n$ photons, 
    can be estimated in time $O(\frac{n^k}{\epsilon^2})$ with additive error $\pm \epsilon$, by sampling from $\widetilde{U}\ket{\psi}$.
\end{theorem}
\begin{proof}
    When $\Lambda = p(\hat{n}_1, \dots, \hat{n}_m)$ the eigenfunction is given by $\lambda(X) = p(X_1, \dots, X_m)$
    which can be bounded as $\absolutevalue{\lambda(X)} \leq C \, n^k$ where $C$ is a constant and $k$ is the degree of $p$ since $X_i \leq n$. 
    Therefore, by Proposition~\ref{thm-estimation}, the expectation value can be estimated in time $O(\frac{1}{\epsilon^2})$ 
    to within additive error $\pm \epsilon \, n^k$. Equivalently, we can reach estimation precision $\epsilon$ in time $O(\frac{n^k}{\epsilon^2})$.
\end{proof}

It is worth emphasizing that polynomials of number operators are in general \emph{unbounded} operators.
Therefore the standard quantum theory --- restricted to Hilbert spaces and bounded linear maps --- does not directly apply in this setting.
For example, we do not know how to recognise operators $\mathcal{Q}$ that can be diagonalised as 
$\mathcal{Q} = \widetilde{U}^\dagger \Lambda \widetilde{U}$ for a single-particle unitary $U$ and a polynomial of number operators $\Lambda$.
We know that $\mathcal{Q}$ must be normal but a stronger condition is needed for such a factorization to be possible.

\section{Differentiation method}\label{sec:diff}

In order to optimise linear optical circuits, we want to be able to evaluate the gradients of different observables with respect to circuit parameters.
For example, one question we want to answer is: what are the values of circuit parameters that maximise the expected number of photons in a mode (or in a subset of the modes)?
To answer these questions we first show how to compute the gradients of the expectation values of non-interacting observables.
Applying this to the characteristic function, we then show that the methods of the FCS carry over to the variational setting and we are able
to (i) evaluate the gradients of $k$-order moments $\frac{d}{d\theta} \left\langle \hat{n}_j^k \right\rangle$
and (ii) estimate the derivative of the individual probabilities $\frac{d}{d\theta} P_U(X \vert \psi)$,
given only access to the derivative of the FCS characteristic function.

\subsection{Gradients of non-interacting observables}\label{sec:circuit-gradient}

Let $Q$ be a normal observable on the single-particle space and
$U\left(\Theta\right)$ be a linear optical circuit parametrised by a list of $N$ phases $\Theta = (\theta_1, \dots, \theta_N)$
with $\theta_i \in [0, 2\pi)$.
We consider the problem of computing the gradient of the expectation value  
\begin{equation}\label{eq-gradient}
    \grad_{\Theta} E(\Theta) = \grad_{\Theta} \bra{\psi}\widetilde{U}^\dagger\left(\Theta\right) \widetilde{Q} \widetilde{U}\left(\Theta\right)\ket{\psi}
\end{equation}
where the $i$th element of $\grad_{\Theta i} = \frac{d}{d\theta_i}$ is the derivative of the expression above with respect to $\theta_i$.
Note that universal interferometers that implement arbitrary unitaries $U(\Theta)$ on $m$ modes only require $N = O(m^2)$ parameters~\cite{reck_experimental_1994,clements_optimal_2016}.
The gradient can therefore be estimated efficiently by dealing with each parameter separately.
We focus on how the expectation value varies as a function of single parameter $\theta$.
By the product rule the derivative has two terms
\begin{align}
    \begin{split}\label{eq-diff-expectation}
        \frac{d}{d\theta} E(\theta)
        = \bra{\psi}\widetilde{U}^\dagger(\theta)\widetilde{Q}\frac{d\widetilde{U}(\theta)}{d\theta}\ket{\psi}\\
        + \bra{\psi}\frac{d \widetilde{U}^\dagger(\theta)}{d\theta} \widetilde{Q} \widetilde{U}(\theta)\ket{\psi} \, .
    \end{split}   
\end{align}
When the observable $Q$ is self-adjoint the second term is just the conjugate of the first, and we recover a real-valued derivative: 
\begin{equation*}
    \frac{d}{d\theta} E(\theta)
    = 2 \, \mathtt{Re}\left(\bra{\psi}\widetilde{U}^\dagger(\theta)\widetilde{Q}\frac{d\widetilde{U}(\theta)}{d\theta}\ket{\psi}  \right)
\end{equation*}
The parameter $\theta$ is a phase shift in the circuit, so we can write 
$U(\theta) = B\, (e^{i\theta} \oplus \mathtt{I}_{m - 1}) \, A$ for unitaries $A$ and $B$.
The differential of a phase shift is given by the equation $\frac{d}{d\theta} e^{i \hat{n} \theta} = i \hat{n} e^{i \hat{n} \theta}$,
which can be expressed diagrammatically as
\begin{equation}\label{eq-diff-phase}
    \scalebox{0.8}{\tikzfig{figures/diff-phase}} \, .
\end{equation}
By linearity of the differential operator we obtain
$\frac{d}{d \theta} \widetilde{U}(\theta) = i \widetilde{B} ( \hat{n} e^{i\theta \hat{n}} \otimes \mathbb{I}_{m - 1}) \widetilde{A}$.
Then, writing $N = B^\dagger D B$, we can express the first term of (\ref{eq-diff-expectation}) as an expectation value:
\begin{equation}\label{eqn-expectation-diff-diagram}
\scalebox{0.9}{\tikzfig{figures/diff-expectation}}
\end{equation}
where the unitary $U_M$ is obtained via Proposition~\ref{thm-dilation} by dilating the matrix $M=(\mathtt{I}_1\oplus N)\;(W\oplus\mathtt{I}_{m-1})$ 
where $W$ is the underlying matrix of the number operator in Equation~\ref{eq-number-op}.
Since $U_M$ is unitary and thus a normal matrix, it is unitarily diagonalisable $U_M = K_M^\dagger D(u) K_M$ 
for a second unitary $K_M$ and a diagonal matrix $D(u)$, both of size $C = 2m + 2$.
We can then sample from the state $\widetilde{K_M}\ket{\vec{0}_{m + 1}, 1, \psi}$ to estimate the amplitude.
A similar procedure can be repeated to evaluate the second term of (\ref{eq-diff-expectation}) and 
we obtain the following result.

\begin{proposition}\label{thm-gradient}
    Given a state $\psi$ with finitely many photons, and a non-interacting observable $\widetilde{Q}$,
    the gradient of the expectation value $\grad_{\Theta} E(\Theta)$ can be estimated
    by evaluating $O(m^2)$ expectation values from an interferometer with $2m +2$ modes and $n + 1$ photons.
    The setting of parameters for each circuit can be computed in $O(m^3)$ classical time.
    Each expectation value can be estimated within additive error $\pm \epsilon \, (\norm{Q} s_W)^n$ 
    using $O(\frac{1}{\epsilon^2})$ samples of $\widetilde{U}\ket{\psi}$.
\end{proposition}
\begin{proof}
    The algorithm described above and detailed in Appendix \ref{app:diff_alg} can be used to estimate $\frac{d}{d\theta_i} E(\Theta)$ 
    for each parameter $\theta_i$ in the circuit.
    Since a universal inteferometer on $m$ modes has $m (m - 1)$ distinct parameters appearing 
    exactly once in the circuit \cite{reck_experimental_1994},
    the algorithm needs to be repeated $O(m ^2)$ times.
    The classical algorithm for finding circuit parameters 
    requires the diagonalisation of matrices of dimension linear in $m$,
    and the routing of unitary matrices as parameter values, 
    following for example the quadratic algorithm of \cite{clements_optimal_2016}.

    Note that the dilation step requires rescaling the expectation values by a factor of $s_{M}^{n + 1}$, according to Proposition~\ref{thm-dilation}. 
    Using submultiplicativity of the spectral norm, we are only able to bound this quantity as $s_{M} \; \leq \; s_W \,\norm{Q}$,
    where the coefficient $s_W = \frac{1+\sqrt{5}}{2}$ is the spectral norm of the underlying matrix of the number operator given in (\ref{eq-number-op}).
    Following Proposition~\ref{thm-estimation}, we obtain an additive approximation error as above.
\end{proof}

\subsection{Gradients of higher-order observables}\label{sec:observable-gradient}

In Section~\ref{sec:full-counting-statistics} we have seen that the characteristic function gives us access to all the relevant statistics of photon-number distributions as derivatives 
and integrals with respect to eigenvalue parameters $\lambda$.
The parametrization of the expectation value by an eigenvalue $\lambda$ is fundamentally different from the parametrisation
by a circuit parameter $\theta$. Indeed the first only appears once in the expectation, while the second appears twice.
We are interested in how the characteristic function varies as a function of both $\lambda$ and $\theta$:
\begin{equation}
    E(\lambda, \theta) = \bra{\psi} \widetilde{U}^\dagger(\theta) e^{i \sum_{j} \lambda_j \hat{n}_j}\widetilde{U}(\theta) \ket{\psi} \, .
\end{equation}
This corresponds to taking the expectation of the following diagram
\begin{equation}\label{variational-fcs}
    \scalebox{0.95}{\tikzfig{figures/variational-fcs}}
\end{equation}

Moments of the distribution are accessed in FCS by differentiation of $E(\lambda, \theta)$ with respect to $\lambda$. 
Note that $\lambda$ is a cheaper parameter to vary as compared to $\theta$: 
computing $E(\lambda, \theta)$ for different values of $\lambda$ can be done by sampling from the state $\widetilde{U}\psi$ once, 
whereas computing different values of $\theta$ requires preparation of different states $\widetilde{U}(\theta)\psi$.
Since $\lambda$ and $\theta$ are independent, the corresponding differentiation operators commute and we can prove
$$\frac{d}{d\theta} \left\langle \hat{n}_j^k \right\rangle = (-i)^k \left[\frac{d^k}{d\lambda_j^k}  \frac{d}{d\theta} E(\lambda, \theta) \right]_{\lambda = \vec{0}} \, .$$
Following \cite{fan_full-counting_2024}, this derivative can be approximated by finite difference methods using few evaluations of $\frac{d}{d\theta} E(\lambda, \theta)$.
However note that, using the method described in Section~\ref{sec:diff}, evaluating $\frac{d}{d\theta} E(\lambda, \theta)$ for different values of $\lambda$ requires sampling from different states each time. 
This is because the eigenvalues of $Q$ given by $e^{i \lambda_j}$ may not be eigenvalues of the dilated matrix $M$ in Equation~\ref{eq-diff-expectation}.
Alternatively, we can directly apply the differentiation method in Equation~\ref{eq-diff-expectation}
to the diagram in (\ref{variational-fcs}).
This gives a method for evaluating the gradients of $k$-order moments using $k + 1$ ancillary photons and $2 (m + k + 1)$ modes.

Now, we turn our attention to the derivatives of the probabilities of individual events at the output of an interferometer. 
We can extend Equation~\ref{eq-fourier} to the setting with multiple parameters
\begin{equation*}
    P_U(X \vert \psi) = \frac{1}{(2\pi)^m} \int_{-\pi}^\pi E(\lambda, \theta) e^{-i\sum_j\lambda_j X_j} d\lambda_1 \dots d\lambda_m \, .
\end{equation*}
Then, since the bounds of the integral are independent of the value of $\theta$, we can use Leibniz' rule to obtain
\begin{equation}\label{eq-fourier}
    \frac{d}{d\theta} P_U(X \vert \psi) = \frac{1}{(2\pi)^m} \int_{-\pi}^\pi \frac{d E(\lambda, \theta)}{d\theta} e^{-i\sum_j\lambda_j X_j} d\lambda_1 \dots d\lambda_m \, .
\end{equation}
We recover the gradient of the probabilities as a Fourier transform of $\frac{d}{d\theta} E(\lambda, \theta)$. 
In practice, the integral is discretised over a finite number $N$ of equally separated values in the $[-\pi, \pi]$ interval
and approximated by the discrete Fourier transform 
\begin{equation*}
    \frac{d}{d\theta} P_U(X \vert \psi) \approx \frac{1}{(2\pi)^m} \sum_{k _1, \dots, k_m = 0}^N \frac{d E(\frac{2\pi \vec{k}}{N}, \theta)}{d\theta} e^{-i\sum_j \frac{2 \pi k_j}{N} X_j}\, .
\end{equation*}
The complexity of estimating the full $m$-dimensional discrete Fourier transform is $O(N^{2m})$.
As noted in~\cite{ivanov_complexity_2020}, if we are only interested in the marginals for a constant number of modes $k$,
then $N = O(n^k)$ and we can evaluate the marginal probabilities efficiently from the characteristic function,
by a similar reasoning as in Section~\ref{sec:non-interacting}.
In general we may note that the number of required terms $N$ will depend on $\psi$, and in particular on the
number of non-zero terms of Equation~\ref{eq-expectation-values}. Reference ~\cite{fan_full-counting_2024} analyses how large $N$ needs to be for an accurate approximation and proposes filtering
methods for improving efficiency.

\subsection{Example: tunable beam splitter}\label{sec:example}
We follow with an example involving the full-counting-statistics of the tunable beam splitter. 
The expectation $E(\lambda, \theta)$ for the tunable beam splitter with input state $\psi = \ket{0, n}$ is given by the diagram
\begin{equation}\label{eqn-expectation-diff-example}
\scalebox{0.9}{\tikzfig{figures/diff-expectation-example}}
\end{equation}
By multiplying the matrices in this diagram we obtain a $2 \times 2$ matrix. 
The zero-photon states and effects delete all entries in this matrix except one, 
then this entry is raised to the power $n$ (the number of photons assigned to that entry) and we obtain
$$E(\lambda, \theta) = (e^{i\lambda_1} \mathtt{cos}^2(\frac{\theta}{2}) + e^{i \lambda_2} \mathtt{sin}^2(\frac{\theta}{2}))^n $$
For example, taking $\lambda_1 = 0$, $\lambda_2 = \pi$ we compute the expectation of the $Z$ observable 
$E(Z) = (\mathtt{cos}^2(\frac{\theta}{2}) - \mathtt{sin}^2(\frac{\theta}{2}))^n = \mathtt{cos}^n(\theta)$.
The derivative of $E(Z)$ is $- n \mathtt{sin}(\theta) \mathtt{cos}^{n - 1}(\theta)$.
Using our method, this derivative is evaluated as the following sum of amplitudes
$$\frac{d}{d\theta} E(Z) = \, \scalebox{0.9}{\tikzfig{figures/tbs-derivative}}$$
where the matrix $M$ is given by
$$M = \begin{pmatrix} \mathtt{cos}(\theta) & \frac{1}{\sqrt{2}}e^{i \theta} \\ \frac{1}{\sqrt{2}}& 0 \end{pmatrix}$$
These amplitudes are estimated as expectation values of a larger circuit obtained by diagonalising the dilation of $M$ 
given by
\begin{equation}
	U_M = \begin{bmatrix}-\frac{M^\dagger}{s_M} & D_{M^\dagger} \\ D_M & \frac{M}{s_M} \end{bmatrix}
\end{equation}
where $D_M = \sqrt{\mathtt{I}_n - M^\dagger M}$ and $s_M =  \frac{1+\sqrt{5}}{2}$ is the largest singular value of $M$, 
which is independent of $\theta$.

Now we consider the moments of the distribution. The mean number of photons in the first output mode is given by
$$ \left\langle \hat{n}_1\right\rangle = (-i) \left[\frac{d}{d\lambda_1} E(\lambda, \theta)\right]_{\lambda = 0} = n \mathtt{cos}^2(\theta)$$
Using the rules for split and merge given in \cite{de_felice_quantum_2023}, we can obtain this quantity by the diagram manipulation
$$\scalebox{0.9}{\tikzfig{figures/tbs-moment}}$$
where $\alpha_1 = \mathtt{cos}(\frac{\theta}{2})$ and $\alpha_2 = \mathtt{sin}(\frac{\theta}{2})$.
The derivative of $\left\langle \hat{n}_1\right\rangle$ with respect to $\theta$ is $-\frac{n}{2}\mathtt{sin}(\theta)$. 
Equivalently it is given as
$$\frac{d}{d\theta} \left\langle \hat{n}_1\right\rangle = \, \scalebox{0.9}{\tikzfig{figures/tbs-moment-derivative}}$$
where
$$M = \begin{pmatrix} 1 & \frac{1}{\sqrt{2}}e^{i \theta} & \frac{i}{2} (e^{i \theta} + 1) \\ \frac{1}{\sqrt{2}} e^{- i \theta} & \frac{i}{\sqrt{2}}& 0 \\ \frac{i}{2} (e^{- i \theta} + 1) & 0 & 0 \end{pmatrix}$$

Finally, we recover the individual probabilities by a Fourier transform
$$P_{\theta}(X_1, X_2) = \int_{-\pi}^\pi \int_{-\pi}^\pi E(\lambda, \theta) e^{-i X_1 \lambda_1} e^{-i X_2 \lambda_2} d\lambda_1 d\lambda_2$$
When $n=2$ we can compute this integral exactly to find $P_{\theta}(2, 0) = \mathtt{cos}^4(\frac{\theta}{2})$,
$P_{\theta}(0, 2) = \mathtt{sin}^4(\frac{\theta}{2})$ and $P_{\theta}(1, 1) = \frac{1}{2} \mathtt{sin}(2\theta)$.
We recover the Hong-Ou-Mandel effect when $\theta = \frac{\pi}{2}$.  
We can moreover compute the derivative of these probabilities using the Fourier transform in Equation~\ref{eq-fourier}. 
For example, $\frac{d}{d\theta} P_{\theta}(1, 1) = \mathtt{cos}(2 \theta)$. 
The sign of this function gives us the direction in which the angle should be rotated to minimise the probability.

\section{Conclusion}\label{sec:conclusion}%

The literature on the complexity of linear optics has largely been focused on sampling problems, as they provide a rigorous advantage \cite{aaronson_computational_2011}.
However, in the contexts of quantum chemistry and machine learning, optimisation landscapes are sometimes easier 
to define in terms of the expectation values of observables \cite{cerezo_variational_2021}, which give rise to estimation (rather than sampling) problems.
Moreover, as mentioned in~\cite{ivanov_complexity_2020}, little evidence has been given for the ``lack of universality'' of bosonic linear optics with arbitrary initial states and (non-adaptive) measurements.
In this work, we developed a theory of observables in linear optics that allows for the comparison of classical and quantum algorithms on a large class of estimation problems.
We then gave different methods to compute the analytic gradients of the expectation values of linear optical circuits with respect to their phase parameters.
The algorithm, in the general setting, relies on the evaluation of the gradient $\grad_\theta E$ of a characteristic function $E(\lambda, \theta)$ of the photon-number distribution $\widetilde{U}(\theta) \ket{\psi}$.
Sampling different values of $\lambda$ then allows to approximate the gradients of higher-order observables of the distribution.
By varying the complexity of the observable $\mathcal{Q}$ and of the state $\psi$, we obtain estimation problems of increasing classical complexity.

Our algorithm for estimating $\grad_\theta E$ requires $O(m^2)$ evaluations of expectation values for circuits of size $2m + 2$ with $n + 1$ photons.
Expectation values are computed via Proposition~\ref{thm-estimation} after performing the 
singular value decomposition of matrices $M$ and $U_M$ whose dimensions are linear in the number of modes $m$.
The finite sampling accuracy that we obtain is $\pm \epsilon (s_W \norm{Q})^n$ where $s_W \approx 1.618$ is a singular value associated to the number operator and $n$ is the number of photons in the circuit.
For the bosonic product state, this matches the estimation error of a classical algorithm that evaluates each partial derivative as a sum of two permanents. 
Therefore, it leaves the problem open of whether more efficient quantum algorithms for estimating these gradients can be achieved.

The natural next step is the implementation of gradient descent on currently available photonic processors 
\cite{maring_general-purpose_2023, pentangelo_high-fidelity_2023}
which could allow us to speed-up machine learning and quantum chemistry experiments.
The results in this paper were tested using the diagrammatic software DisCoPy \cite{toumi_discopy_2022} 
and its interface with Perceval \cite{heurtel_perceval_2023}.
They indicate near-term photonic technology as a promising testing ground for bosonic machine learning landscapes.
Important steps in the direction of physical implementability are analyses of the impact of loss and distinguishability, 
in addition to the finite sampling error studied here.

\section*{Note added}

This is an extended and revised version of a preprint published earlier this year. 
In this short time period, different works have been published on the differentiation of optical circuits \cite{cimini_variational_2024, facelli_exact_2024, pappalardo_photonic_2024, hoch_variational_2024}.
The method presented here is distinct from all the above, as it is based on dilation rather than parameter-shift rules.
As mentioned in \cite{pappalardo_photonic_2024} by the first author, parameter-shift rules are more suitable for near-term implementation thanks to their low sampling complexity.

\section*{Acknowledgements}
The authors would like to thank Tuomas Laakkonen, Pablo Andres-Martinez, Pierre-Emmanuel Emeriau and Richie Yeung for their valuable feedback. 

\bibliographystyle{plain}
\bibliography{preamble/references}

\appendix

\section{Dilation of the number operator}\label{app:dil_num_op}
To illustrate Proposition \ref{thm-dilation}, we show how the number operator from (\ref{eq-number-op}) can be simulated by a unitary operator.
Note first that the spectral norm of $W = \begin{pmatrix} 0 & 1\\ 1 &1 \end{pmatrix}$ is $ s_W = \norm{W} = \sqrt{\frac{3 + \sqrt{5}}{2}}$ so by Proposition \ref{thm-dilation}, the dilation of the number operator is
\begin{equation}
	U_W = \begin{bmatrix}-\frac{W^\dagger}{s_W} & D_W \\ D_W & \frac{W}{s_W} \end{bmatrix}
\end{equation}
where 
\begin{equation}\nonumber
D_W = \sqrt{3s_W^2-7}\begin{pmatrix} \frac{1}{5}(1+s_W^2) & -\frac{1}{\sqrt{5}}\\ -\frac{1}{\sqrt{5}} & -\frac{1}{5}(s_W^2-4) \end{pmatrix}.
\end{equation}
We can thus express the number operator as the following heralded linear optical circuit:
\begin{align}\label{eq-phase_dilation}
    \begin{split}
        \scalebox{0.8}{\tikzfig{figures/phase-dilation}}
    \end{split}
\end{align}
The normalisation $s_W$ within the dilation box induces a factor of $s_W^{n+1}$ where $n$ is the number of input photons from the bottom wire and the extra photon comes from the definition of the number operator. 
When we input and post-select $n$ photons with the unbounded number operator from (\ref{eq-phase}) we find
\begin{align}\nonumber
    \begin{split}
        \scalebox{0.8}{\tikzfig{figures/number-op-example}}
    \end{split}
\end{align}
and if we input and post-select $n$ photons with the untiary dilation from (\ref{eq-phase_dilation}) we find
\begin{align}\nonumber
 \hspace*{-3mm}
    \begin{split}
        \scalebox{0.8}{\tikzfig{figures/phase-dilation-example}}
    \end{split}
\hspace{3mm}
\end{align}
where we have used (\ref{eq-amplitudes}) to solve both diagrams.
While both these operators give the same result it is important to reiterate that the latter case can be implemented on a linear optical circuit, as it is unitary, while the former cannot.

\section{Differentiation algorithm}\label{app:diff_alg}
We give an efficient algorithm for computing the derivative of the expectation value of an $m\times m$ non-interacting observable $Q=U^\dagger D U$
in any state $\psi$ over a finite number $n$ of photons.
To evaluate the first term in Equation~\ref{eq-diff-expectation}, we ffollow the procedure:
\begin{enumerate}[i]
\item compute the matrix $M=(\mathtt{I}_1\oplus B^\dag DB)\;(W\oplus\mathtt{I}_{m-1})$ from the derivation in equation \ref{eqn-expectation-diff-diagram}, this takes time $O(m)$,
\item perform the singular value decomposition of $M$ and construct the Halmos dilation $U_M$ following Proposition~\ref{thm-dilation}, in $O(m^3)$,
\item compute the diagonalisation $U_M = K_M D(u) K_M^\dagger$, in time $O(m^3)$, 
\item route $K_M$ as an interferometer, following the $O(m^2)$ algorithm of \cite{clements_optimal_2016},
\item sample from the circuit $K_M$ on $2m + 2$ modes using the initial state $\ket{\vec{0}_{m+1}, 1, \psi}$ with $1$ ancillary photon to estimate the full probability distribution,
\item compute the expectation value of $D(u)$ from the obtained distribution to find the first term of the derivative of $E(\theta)$.
\end{enumerate}
A similar procedure evaluates the second term in Equation~\ref{eq-diff-expectation} as an expectation value.
We are then able to estimate the derivative by repeating the procedure above twice.

\end{document}